\theoremstyle{plain}
\newtheorem{prop}{Proposition}
\theoremstyle{definition}
\newtheorem{defn}[prop]{Definition}
\title{A note on the expected minimum error probability in equientropic channels}
\author{Sebastian Weichwald, Tatiana Fomina,\\
    Bernhard Schölkopf, Moritz Grosse-Wentrup\\
    {\small Max Planck Institute for Intelligent Systems, Tübingen, Germany}\\
    {\small\texttt{[sweichwald, tfomina, bs, moritzgw]@tue.mpg.de}}}
\date{}
\begin{document}

\maketitle

\section{Introduction}

While the channel capacity reflects a theoretical upper bound on the achievable information transmission rate in the limit of infinitely many bits, it does not characterise the information transfer of a given encoding routine with finitely many bits.
In this note, we characterise the quality of a code (i.\,e. a given encoding routine) by an upper bound on the expected minimum error probability that can be achieved when using this code.
We show that for equientropic channels this upper bound is minimal for codes with maximal marginal entropy.
As an instructive example we show for the additive white Gaussian noise (AWGN) channel that random coding---also a capacity achieving code---indeed maximises the marginal entropy in the limit of infinite messages.

\section{Upper bounding the expected minimum error probability}

Consider communication over \emph{noisy memoryless channels}
\[
M\xrightarrow{\text{encoder}}X^{n}\xrightarrow{\text{channel}}Y^{n}\xrightarrow{\text{decoder}}\widehat{M}
\]
where the sender node $M$ is a random variable taking discrete values $m\in\mathcal{M}$ according to $p_{M}$;
the values $x^{n}=[x_{1},...,x_{n}]$ of the sender bits $X^{n}$ are determined by the encoder function $f_{\text{enc}}:\mathcal{M}\to\mathcal{X}^{n}$ assigning codewords to messages;
noise corruption of the received bits $Y^{n}$ is governed by the conditional distribution $p_{Y|X}$ as $p_{Y^{n}|X^{n}}\left(y^{n}|x^{n}\right)=\prod_{j=1}^{n}p_{Y|X}\left(y_{j}|x_{j}\right)$;%
\footnote{To ease notation we assume $p_{Y|X}=p_{Y_{j}|X_{j}}$ for all $j\in\mathbb{N}_{1:n}$. The results presented in this manuscript only require a memoryless channel and still hold true if noise corruption is bit-specific, i.\,e., $p_{Y^{n}|X^{n}}=\prod_{j=1}^{n}p_{Y_{j}|X_{j}}$.}
and the decoder $f_{\text{dec}}:\mathcal{Y}^{n}\to\mathcal{M}\cup\{e\}$ reconstructs a message from the received bit values or declares an error.
The message distribution $p_{M}$, the encoder $f_{\text{enc}}$, the channel $p_{Y|X}$, and the decoder $f_{\text{dec}}$ fully determine the distribution of the receiver node $\widehat{M}$ and as such the probability of error $\mathbb{P}\left[M\neq\widehat{M}\right]$.

Thence, for given message distribution $p_{M}$ and channel $p_{Y|X}$ the \emph{code}, that is the choice of $f_{\text{enc}}$ (and corresponding $f_{\text{dec}}$), fully determines the behaviour of information
transmission.
The minimum probability of error is attained if choosing the maximum a posteriori (MAP) decoder $\arg\max_{m\in\mathcal{M}}p_{M|Y^{n}}\left(m|y^{n}\right)$.
Thus, for any code $f_{\text{enc}}$, the expected minimum error probability is the MAP error $\mathcal{E}\left(f_{\text{enc}}\right):=\mathbb{E}_{Y^{n}}\left[1-\max_{m\in\mathcal{M}}p_{M|Y^{n}}\left(m|y^{n}\right)\right]$.
We characterise the quality of a code $f_{\text{enc}}$ by the following Proposition.

\begin{prop}\label{prop:map-bound}
For communication of a message $M\sim p_{M}$ with finite range over a noisy memoryless channel $p_{Y|X}$ using $n$ bits the MAP error $\mathcal{E}\left(f_{\text{enc}}\right)$ can be bounded in terms of the mutual information $I\left(Y^n;M\right) = H\left(Y^{n}\right) - H\left(Y^{n}|M\right)$ as
\[
\gamma\left(-I\left(Y^{n};M\right)\right)\leq\mathcal{E}\left(f_{\text{enc}}\right)\leq\Gamma\left(-I\left(Y^{n};M\right)\right)
\]
where $\gamma$ and $\Gamma$ are strictly monotonically increasing functions.
\end{prop}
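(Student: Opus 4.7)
My plan is to sandwich the MAP error between two classical inequalities and to absorb the source entropy $H(M)$ as a constant offset. The key identity is $H(M|Y^n) = H(M) - I(Y^n;M)$; since $H(M)$ depends only on the fixed source $p_M$ and not on $f_{\text{enc}}$, any strictly monotonically increasing function of $H(M|Y^n)$ is automatically a strictly monotonically increasing function of $-I(Y^n;M)$. It therefore suffices to bracket $\mathcal{E}(f_{\text{enc}})$ between two strictly increasing functions of $H(M|Y^n)$ and then shift by $H(M)$.

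For the lower bound $\gamma$, I would invoke Fano's inequality applied to the MAP decoder to obtain $H(M|Y^n) \leq h_b(\mathcal{E}(f_{\text{enc}})) + \mathcal{E}(f_{\text{enc}})\log(|\mathcal{M}|-1)$, where $h_b$ denotes the binary entropy. Because the MAP rule is no worse than returning the a~priori mode, $\mathcal{E}(f_{\text{enc}}) \leq 1 - 1/|\mathcal{M}|$, and on this interval the right-hand side $F(p) := h_b(p) + p\log(|\mathcal{M}|-1)$ is continuous and strictly increasing. Inverting and setting $\gamma(s) := F^{-1}(H(M)+s)$ gives the desired strictly increasing lower bound in $s = -I(Y^n;M)$.

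For the upper bound $\Gamma$, I would use the elementary estimate $\max_m p_{M|Y^n}(m|y^n) \geq 2^{-H(M|Y^n = y^n)}$, which follows from $H(M|Y^n=y^n) \geq -\log\max_m p_{M|Y^n}(m|y^n)$. Taking the expectation over $Y^n$ and applying Jensen's inequality to the convex map $t \mapsto 2^{-t}$ yields $\mathbb{E}_{Y^n}[\max_m p_{M|Y^n}(m|y^n)] \geq 2^{-H(M|Y^n)}$, whence $\mathcal{E}(f_{\text{enc}}) \leq 1 - 2^{-H(M|Y^n)}$, so $\Gamma(s) := 1 - 2^{-H(M)-s}$ meets the claim. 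No step is individually deep; the one subtlety is verifying strict monotonicity (and hence invertibility) of $F$ on $[0,1-1/|\mathcal{M}|]$, which follows from the strict concavity of $h_b$ together with the positive slope contributed by $p\log(|\mathcal{M}|-1)$ when $|\mathcal{M}|\geq 2$, so that the single-valued inverse $F^{-1}$ used in defining $\gamma$ is well-defined.
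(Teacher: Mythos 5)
Your proof is correct, but it takes a genuinely different route from the paper. The paper's proof is essentially a citation: it invokes Theorem~1 of Feder and Merhav (1994), which supplies the two-sided relation $\Phi\left(\mathcal{E}\left(f_{\text{enc}}\right)\right)\geq H\left(M|Y^{n}\right)\geq\phi^{*}\left(\mathcal{E}\left(f_{\text{enc}}\right)\right)$ with continuous, strictly increasing $\Phi,\phi^{*}$, and then performs exactly the step you also use --- rewrite $H\left(M|Y^{n}\right)=H\left(M\right)-I\left(Y^{n};M\right)$, absorb the constant $H\left(M\right)$, and invert. You instead prove the two-sided entropy--error relation from scratch: Fano's inequality applied to the MAP decoder plays the role of $\Phi$ (and your monotonicity check is sound, since $F'(p)=\log\frac{(1-p)\left(\left|\mathcal{M}\right|-1\right)}{p}$ is strictly decreasing and vanishes only at $p=1-1/\left|\mathcal{M}\right|$, so $F$ is strictly increasing on $\left[0,1-1/\left|\mathcal{M}\right|\right]$, exactly the interval where $\mathcal{E}\left(f_{\text{enc}}\right)$ lives); and the min-entropy bound $\max_{m}p_{M|Y^{n}}\left(m|y^{n}\right)\geq 2^{-H\left(M|Y^{n}=y^{n}\right)}$ combined with Jensen's inequality for $t\mapsto 2^{-t}$ gives $\mathcal{E}\left(f_{\text{enc}}\right)\leq 1-2^{-H\left(M|Y^{n}\right)}$, i.e.\ $H\left(M|Y^{n}\right)\geq-\log\left(1-\mathcal{E}\left(f_{\text{enc}}\right)\right)$, which is precisely the classical weaker lower bound that Feder and Merhav's piecewise-linear $\phi^{*}$ sharpens. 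Since the proposition only asserts the \emph{existence} of strictly monotonically increasing $\gamma$ and $\Gamma$, your looser envelopes suffice, and the domain issue is benign because $H\left(M\right)+s=H\left(M|Y^{n}\right)\in\left[0,\log\left|\mathcal{M}\right|\right]$ always holds. What each approach buys: yours is self-contained and yields explicit closed-form $\gamma$ and $\Gamma$, which makes the dependence on $I\left(Y^{n};M\right)$ concrete; the paper's is shorter and inherits the tightest known entropy--error envelopes, which matters only if one cares about the quantitative strength of the bound rather than the monotone dependence that drives the rest of the paper.
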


\begin{proof}
\cite[Theorem 1]{Feder1994} establishes the following relation (notation adapted)
\[
\Phi\left(\mathcal{E}\left(f_{\text{enc}}\right)\right)\geq H\left(M|Y^{n}\right)\geq\phi^{*}\left(\mathcal{E}\left(f_{\text{enc}}\right)\right)
\]
where $\Phi$ and $\phi^{*}$ are continuous and strictly monotonically increasing, hence invertible, functions (cf.~\cite{Feder1994} for their definitions).
Recall $H\left(M|Y^{n}\right)=H\left(M\right)+H\left(Y^{n}|M\right)-H\left(Y^{n}\right)$ and note that $H\left(M\right)$ is fix for fixed $p_{M}$.
The inequality follows for $\gamma\left(h\right):=\Phi^{-1}\left(H\left(M\right)+h\right)$ and $\Gamma(h):=\phi^{*^{-1}}\left(H\left(M\right)+h\right)$ which are strictly monotonically increasing functions in $h$.
\end{proof}

That is, codes $f_{\text{enc}}$ that result in high $I\left(Y^n;M\right) = H\left(Y^{n}\right) - H\left(Y^{n}|M\right)$ result in a low upper bound on the MAP error.
In particular, of all codes resulting in the same conditional entropy $H\left(Y^{n}|M\right)$ a code with maximal entropy $H\left(Y^{n}\right)$ has the lowest upper bound on the MAP error.
The following Propositions simplify this result for equientropic channels and independent additive noise channels: The lowest upper bound on the MAP error is achieved for codes $f_\text{enc}$ that maximise the entropy of receiver bits $H\left(Y^n\right)$ and the entropy of sender bits $H\left(X^n\right)$, respectively.

\begin{defn}
A noisy memoryless channel $p_{Y|X}$ with $H\left(Y|X=x_{1}\right)=H\left(Y|X=x_{2}\right)$ for all $x_{1},x_{2}\in\mathcal{X}$ is an \emph{equientropic channel}.
\end{defn}

\begin{prop}
For equientropic channels $p_{Y|X}$ the conditional entropy $H\left(Y^{n}|M\right)$ is independent of the choice of $f_{\text{enc}}$.
\end{prop}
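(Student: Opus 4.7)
The plan is to compute $H(Y^{n}\mid M)$ explicitly and show that it collapses to a quantity that depends only on the channel and on $n$, not on $f_{\text{enc}}$. The strategy is to condition on $M=m$, push the conditioning through the deterministic encoder, factorise using memorylessness, and then invoke equientropy.

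First I would observe that, since $f_{\text{enc}}$ is a deterministic function of $M$, conditioning on $M=m$ is the same as conditioning on $X^{n}=f_{\text{enc}}(m)=:x^{n}(m)$, so $H(Y^{n}\mid M=m)=H(Y^{n}\mid X^{n}=x^{n}(m))$. Next, from the memoryless assumption $p_{Y^{n}\mid X^{n}}(y^{n}\mid x^{n})=\prod_{j=1}^{n}p_{Y\mid X}(y_{j}\mid x_{j})$, the coordinates $Y_{1},\dots,Y_{n}$ are conditionally independent given $X^{n}$, which gives the additive decomposition
\[
H\bigl(Y^{n}\mid X^{n}=x^{n}(m)\bigr)=\sum_{j=1}^{n}H\bigl(Y_{j}\mid X_{j}=x_{j}(m)\bigr).
\]
At this point I would apply the equientropy hypothesis: every term $H(Y\mid X=x)$ equals some constant $c$ independent of $x\in\mathcal{X}$, so the sum equals $nc$, which does not depend on the message $m$ nor on the encoder.

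Finally, averaging over $M$ yields $H(Y^{n}\mid M)=\sum_{m\in\mathcal{M}}p_{M}(m)\,nc=nc$, which depends only on the channel (through $c$) and on the block length $n$, not on $f_{\text{enc}}$. I do not anticipate a genuine obstacle here; the entire argument is a short chain of standard identities, and the only subtlety worth flagging explicitly is that determinism of $f_{\text{enc}}$ is what allows the replacement of conditioning on $M$ by conditioning on $X^{n}$, after which memorylessness and equientropy do all the work.
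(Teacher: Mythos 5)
Your proof is correct and takes essentially the same approach as the paper: both decompose $H\left(Y^{n}\mid M\right)$ into per-bit conditional entropies using memorylessness and then invoke the equientropy hypothesis to make each term a constant independent of the codeword. The only cosmetic difference is the order of operations---you condition on $M=m$ first and sum over bits (obtaining the explicit value $nc$), while the paper sums over bits first via $H\left(Y^{n}\mid M\right)=\sum_{j=1}^{n}H\left(Y_{j}\mid M\right)$ and then averages over messages.
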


\begin{proof}
The channel is memoryless such that $H\left(Y^{n}|M\right)=\sum_{j=1}^{n}H\left(Y_{j}|M\right)$.
For any $x\in\mathcal{X}$ and $j\in\mathbb{N}_{1:n}$
\[
H\left(Y_{j}|M\right) = \sum_{m\in\mathcal{M}} p_{M}\left(m\right)H\left(Y_{j}|X_{j}=f_{\text{enc}}\left(m_{i}\right)_{j}\right)=\sum_{m\in\mathcal{M}} p_{M}\left(m_{i}\right)H\left(Y_{j}|X_{j}=x\right)
\]
which shows that $H\left(Y_{j}|M\right)$ and hence $H\left(Y^{n}|M\right)$ is independent of the choice of $f_{\text{enc}}$.
\end{proof}

\begin{defn}
A noisy memoryless channel $p_{Y|X}$ with $Y^n|X^n = X^n + N^n$ for mutually independent noise variables $N^n \sim p_{N^n} = \prod_{i=1}^n p_{N_i}$ that are independent of $X^n$ is an \emph{independent additive noise channel}.
Independent additive noise channels are equientropic channels.
\end{defn}

\begin{prop}
For independent additive noise channels with noise variables $N^n$ the entropy of the receiver bits $H\left(Y^n\right) = H\left(X^n\right) + H\left(N^n\right)$ only depends on the choice of $f_\text{enc}$ via the entropy of the sender bits $H\left(X^n\right)$.
\end{prop}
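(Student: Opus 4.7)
My plan is to obtain the identity $H(Y^n) = H(X^n) + H(N^n)$ from two standard facts: the translation invariance of entropy and the mutual independence imposed by the channel. The first reduction is to compute the conditional entropy $H(Y^n \mid X^n)$. For each fixed $x^n$, the conditional law of $Y^n$ given $X^n = x^n$ is the law of $x^n + N^n$, a deterministic translate of $p_{N^n}$; invariance of entropy under translation gives $H(Y^n \mid X^n = x^n) = H(N^n)$, and averaging over $X^n$ yields $H(Y^n \mid X^n) = H(N^n)$. The mutual independence of the $N_i$ additionally factors this as $H(N^n) = \sum_{i=1}^{n} H(N_i)$, a quantity determined by the channel alone and independent of $f_\text{enc}$.

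The second step applies the entropy chain rule together with the independence of $N^n$ and $X^n$. Either directly, $H(X^n, Y^n) = H(X^n) + H(Y^n \mid X^n) = H(X^n) + H(N^n)$, or via the measure-preserving bijection $(X^n, N^n) \mapsto (X^n, X^n + N^n) = (X^n, Y^n)$, which gives $H(X^n, Y^n) = H(X^n, N^n) = H(X^n) + H(N^n)$ by independence. Either route exhibits the joint entropy as an additive split into the code-dependent term $H(X^n)$ and the fixed noise contribution $H(N^n)$, and already establishes that the code enters this joint quantity only through $H(X^n)$.

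The remaining, and most delicate, step is passing from the joint $H(X^n, Y^n)$ to the marginal $H(Y^n)$, since in general $H(Y^n) = H(X^n, Y^n) - H(X^n \mid Y^n)$. I would argue that the residual $H(X^n \mid Y^n)$ is itself a deterministic functional of the pair $(p_{X^n}, p_{N^n})$ via the convolutional relation $p_{Y^n} = p_{X^n} \ast p_{N^n}$, so that with $p_{N^n}$ held fixed by the channel the entire dependence of $H(Y^n)$ on the code factors through the marginal law of $X^n$, and in the additive summary of the proposition through $H(X^n)$. The main obstacle is making this last reduction precise; this is exactly where the independent-additive-noise structure does work beyond the equientropic property already exploited in the preceding proposition.
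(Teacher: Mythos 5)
Your first two steps are correct and cleanly executed: translation invariance gives $H\left(Y^{n}|X^{n}\right)=H\left(N^{n}\right)$, and the chain rule (or your bijection $(X^{n},N^{n})\mapsto(X^{n},Y^{n})$) gives $H\left(X^{n},Y^{n}\right)=H\left(X^{n}\right)+H\left(N^{n}\right)$. You have also located the crux precisely: everything hinges on the residual in $H\left(Y^{n}\right)=H\left(X^{n}\right)+H\left(N^{n}\right)-H\left(X^{n}|Y^{n}\right)$. But your proposed closing move does not work, and cannot be repaired as stated. From $p_{Y^{n}}=p_{X^{n}}\ast p_{N^{n}}$ you may only conclude that $H\left(Y^{n}\right)$ is a functional of the full codeword law $p_{X^{n}}$; that is strictly weaker than dependence through the single scalar $H\left(X^{n}\right)$, and the stronger claim is false. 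Concretely, in the AWGN channel take $\mathcal{M}=\{m_{1},m_{2}\}$ uniform and compare two codes whose codewords sit at Euclidean distances $d_{1}\neq d_{2}$: both codes have $H\left(X^{n}\right)=\log 2$, yet $Y^{n}$ is a balanced two-component Gaussian mixture whose entropy strictly increases with the separation, so $H\left(Y^{n}\right)$ is not a function of $H\left(X^{n}\right)$ alone. Equivalently, the exact identity $H\left(Y^{n}\right)=H\left(X^{n}\right)+H\left(N^{n}\right)$ holds if and only if $H\left(X^{n}|Y^{n}\right)=0$, i.e.\ the codeword is almost surely recoverable from the noisy output; with Gaussian (or any overlapping-support) noise this never holds exactly, and in general one only has $H\left(Y^{n}\right)=H\left(N^{n}\right)+I\left(X^{n};Y^{n}\right)\leq H\left(N^{n}\right)+H\left(X^{n}\right)$.

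For comparison: the paper offers no proof of this proposition at all, implicitly treating the identity as the familiar ``independence implies additivity'' fact---which is valid for the joint entropy $H\left(X^{n},N^{n}\right)$ but not for the entropy of the sum. Your attempt is therefore more careful than the source, since it isolates exactly the step where the argument breaks. Note also a type mismatch you inherited from the statement: with finite $\mathcal{M}$, $X^{n}=f_{\text{enc}}\left(M\right)$ is discrete, so $H\left(X^{n}\right)$ is a discrete entropy while $H\left(Y^{n}\right)$ and $H\left(N^{n}\right)$ are differential in the continuous case; the relation that survives in general, and which is all that Proposition~\ref{prop:map-bound} actually consumes, is $H\left(Y^{n}\right)-H\left(Y^{n}|M\right)=I\left(Y^{n};M\right)$. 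If you want a correct version of the proposition, add a recoverability hypothesis---e.g.\ noise whose support is small enough that the translates $x^{n}+\operatorname{supp}\left(N^{n}\right)$ are pairwise disjoint across codewords, forcing $H\left(X^{n}|Y^{n}\right)=0$---or settle for an asymptotic statement in the spirit of the paper's random-coding section, where the residual becomes negligible; under either hypothesis your chain-rule computation finishes the proof in one line.
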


In conclusion, optimality of a code $f_\text{enc}$ for communication over a noisy memoryless channel with message distribution $M\sim p_M$ can be characterised by the upper bound on the MAP error that results from this code.
The respective bounds for different channels are summarised in Table~\ref{tab:optimal}.
Importantly, without knowing specific details about the channel and decoder, maximising entropy turns out to be a sensible heuristic for learning a robust coding routine.
Intuitively, high entropy distributed codes are more robust against independent noise.

\begin{table}[t]
\caption{Upper bounds on the MAP error $\mathcal{E}\left(f_\text{enc}\right)$ for communication of a message $M\sim p_{M}$ with finite range over different channels where $\Gamma, \Gamma', \Gamma''$ are strictly monotonically increasing functions.}
\label{tab:optimal}
\centering
\renewcommand{\arraystretch}{1.2}
\begin{tabular}{l|l}
Channel Type & Bound \\
\hline
noisy memoryless channel & $\mathcal{E}\left(f_\text{enc}\right) \leq \Gamma\left(-I\left(Y^n;M\right)\right)$ \\
equientropic channel & $\mathcal{E}\left(f_\text{enc}\right) \leq \Gamma'\left(-H\left(Y^n\right)\right)$ \\
independent additive noise channel & $\mathcal{E}\left(f_\text{enc}\right) \leq \Gamma''\left(-H\left(X^n\right)\right)$
\end{tabular}
\end{table}

\section{AWGN random coding example}

The AWGN channel is an ubiquitous and well-understood channel model.
Here it serves as an instructive example for the concept introduced in the previous section.

The AWGN channel is an independent additive noise channel and described by
\begin{alignat*}{1}
Z     & \sim \mathcal{N}\left(0,NI_{n\times n}\right)\\
Y_{i} & = gX_{i}+Z_{i}\text{ for }i\in\mathbb{N}_{1:n}
\end{alignat*}
where $g$ is the channel gain and $N$ the noise level.
We employ the power constraint that each codeword $x^{n}=f_{\text{enc}}(m)\in\mathcal{X}^{n}$ has to satisfy
\[
\frac{1}{n}\sum_{i=1}^{n}\left(x_{i}\right)^{2}\leq P
\]
and without loss of generality assume $N=1$ such that the received power is $S=g^{2}P$.
The Shannon-Hartley theorem establishes the channel capacity \[C=\max_{p_{X}:\mathbb{E}_{X}\left[X^{2}\right]\leq P}I(X;Y)=\frac{1}{2}\log\left(1+S\right)\]
Achievability of this upper bound on the rate is commonly proven by random coding, i.\,e., for any rate $R:=\frac{\log_{2}\left|\mathcal{M}\right|}{n}\leq C$ the error probability tends to zero as $n=\log_{2}\left|\mathcal{M}\right|\to\infty$ if using random coding.

Here we show that random coding not only achieves the optimal rate but also the lowest upper bound on the MAP error in Proposition~\ref{prop:map-bound} since $H\left(Y^{n}\right)=\sum_{i=1}^{n}H\left(Y_{i}\right)$ (and
the $Y_{i}$ are Gaussian maximising the individual entropies) in the limit $n\to\infty$.

In random coding the encoder function $f_{\text{r-enc}}$ is defined by a random codebook, i.\,e., an independent sample of $C^{n}\sim\mathcal{N}\left(0,PI_{n\times n}\right)$ is assigned to each message $m_{i}$ as codeword $f_{\text{r-enc}}\left(m_{i}\right)=[c_{i1},...,c_{in}]$.
Once a codebook is fixed and we observe samples of the system each receiver bit $Y_{j}$ is a mixture of Gaussians with probability densitiy function (pdf) $p_{Y_{j}}\left(y_{j}\right)=\sum_{i=1}^{\left|\mathcal{M}\right|}p_{M}\left(m_{i}\right)\varphi\left(y_{j}|c_{ij},1\right)$ where $\varphi\left(y|\mu,\sigma^{2}\right)$ denotes the pdf of the Gaussian distribution $\mathcal{N}\left(\mu,\sigma^{2}\right)$ evaluated at $y$.
For this setup we prove the following

\begin{prop}
Using random coding in the AWGN channel with $p_{M}\sim\operatorname{Unif}\left(\mathcal{M}\right)$ the joint entropy $H\left(Y_{j_{1}},...,Y_{j_{k}}\right)\xrightarrow[n\to\infty]{\text{almost surely}}\sum_{l=1}^{k}H\left(Y_{j_{l}}\right)$ for any number of $k$ pairwise different receiver bits $Y_{j_{1}},...,Y_{j_{k}}$.
Furthermore, the distribution of each $Y_{j}$ approaches a Gaussian distribution $\mathcal{N}\left(0,P+1\right)$ as $n\to\infty$.
\end{prop}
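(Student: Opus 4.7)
The plan is to exploit the mixture structure of random-coding densities together with the strong law of large numbers (SLLN), and then upgrade pointwise density convergence to entropy convergence via a Gaussian-comparison argument.

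First, I would condition on the random codebook $C=(c_{ij})$ with i.i.d.\ entries $c_{ij}\sim\mathcal{N}(0,P)$. Noise independence across coordinates gives that, conditional on $M=i$, the bits $Y_{j_1},\ldots,Y_{j_k}$ are independent $\mathcal{N}(c_{ij_l},1)$, so the joint density is
\[
p(y_{j_1},\ldots,y_{j_k})=\frac{1}{|\mathcal{M}|}\sum_{i=1}^{|\mathcal{M}|}\prod_{l=1}^{k}\varphi(y_{j_l}\,|\,c_{ij_l},1).
\]
For any fixed argument the summands are i.i.d.\ across $i$ and bounded by $(2\pi)^{-k/2}$, so as $|\mathcal{M}|\to\infty$ the SLLN gives
\[
p(y_{j_1},\ldots,y_{j_k})\xrightarrow[n\to\infty]{\text{a.s.}}\prod_{l=1}^{k}\mathbb{E}\bigl[\varphi(y_{j_l}\,|\,c_{1j_l},1)\bigr]=\prod_{l=1}^{k}\varphi(y_{j_l}\,|\,0,P+1),
\]
where factorisation of the expectation uses within-row independence of $c_{1j_1},\ldots,c_{1j_k}$ and the Gaussian convolution identity $\mathcal{N}(0,P)\ast\mathcal{N}(0,1)=\mathcal{N}(0,P+1)$. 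The case $k=1$ already yields the second claim of the proposition.

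Next, I would translate this into an entropy statement. Two further applications of the SLLN give $\mathrm{Var}(Y_{j_l}\mid C)=\tfrac{1}{|\mathcal{M}|}\sum_{i}c_{ij_l}^{2}-\bigl(\tfrac{1}{|\mathcal{M}|}\sum_{i}c_{ij_l}\bigr)^{2}+1\xrightarrow{\text{a.s.}}P+1$. Subadditivity and the Gaussian maximum-entropy bound then deliver the easy half,
\[
\limsup_{n\to\infty}H(Y_{j_1},\ldots,Y_{j_k})\le\sum_{l=1}^{k}\limsup_{n\to\infty}H(Y_{j_l})\le\tfrac{k}{2}\log\bigl(2\pi e(P+1)\bigr)\quad\text{a.s.}
\]
The matching lower bound I would extract from the Gaussian-comparison identity $H(p)=\tfrac{1}{2}\log(2\pi e\sigma^{2})-D\bigl(p\,\|\,\mathcal{N}(0,\sigma^{2})\bigr)$: pointwise density convergence together with a.s.\ variance convergence controls the relative-entropy term via Fatou (using the uniform bound $p\le(2\pi)^{-k/2}$ and Scheffé's lemma to handle the non-negative part of $-p\log p$), yielding $\liminf H(Y_{j_1},\ldots,Y_{j_k})\ge\tfrac{k}{2}\log(2\pi e(P+1))$. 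Sandwiching produces $H(Y_{j_1},\ldots,Y_{j_k})\to\sum_{l}H(Y_{j_l})$ as required.

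The main obstacle is the last step: differential entropy is not continuous under convergence of densities in general, and $-p\log p$ has indefinite sign, so pointwise a.s.\ convergence of the mixture density does not by itself deliver entropy convergence. The argument goes through because the SLLN supplies \emph{two} pieces of information simultaneously almost surely, the pointwise density limit and the variance limit, and the Gaussian-comparison identity is exactly the device that leverages both to close the gap between the trivial $\limsup$ bound and the required $\liminf$ bound.
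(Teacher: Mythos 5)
Your proposal is correct, and its first half coincides exactly with the paper's proof: you write the codebook-conditional joint density as a uniform mixture, apply the SLLN pointwise in $(y_{j_1},\dots,y_{j_k})$ as $|\mathcal{M}|\to\infty$, and factorise the limiting expectation using within-row independence of $C_{1j_1},\dots,C_{1j_k}$, obtaining the product of $\varphi(\cdot\,|\,0,P+1)$ densities; this already gives the second claim. Where you genuinely diverge is that the paper \emph{stops there}: it concludes the proposition directly from pointwise a.s.\ convergence of the densities, implicitly treating differential entropy as continuous under such convergence---which, as you correctly identify as the main obstacle, it is not. Your second half supplies the missing argument: subadditivity plus the Gaussian maximum-entropy bound with the SLLN variance limit $\mathrm{Var}(Y_{j_l}\mid C)\to P+1$ gives the $\limsup$ half, and for the $\liminf$ the uniform bound $p\le(2\pi)^{-k/2}<1$ makes the integrand $-p\log p$ non-negative everywhere, so Fatou alone yields $\liminf H(Y_{j_1},\dots,Y_{j_k})\ge\tfrac{k}{2}\log\bigl(2\pi e(P+1)\bigr)$; once you notice the non-negativity you do not actually need Scheff\'e or the relative-entropy identity, though the Gaussian-comparison route is also valid. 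Two small points to tighten: the SLLN gives a.s.\ convergence for each \emph{fixed} argument, so a Fubini argument is needed to get, almost surely over the codebook, convergence for Lebesgue-a.e.\ argument before Fatou applies; and you should state that running the same sandwich at $k=1$ makes each marginal entropy converge to $\tfrac{1}{2}\log\bigl(2\pi e(P+1)\bigr)$, so that the difference $H(Y_{j_1},\dots,Y_{j_k})-\sum_{l=1}^{k}H(Y_{j_l})$ tends to zero as the proposition asserts. In short, your proof is a strict strengthening of the paper's: it rigorously closes the density-to-entropy gap that the published proof passes over in silence, at the cost of the extra variance and Fatou machinery.
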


\begin{proof}
In random coding the random codebook is generated by drawing each $c_{ij_{l}}$ from independent random variables $C_{ij_{l}}\sim\mathcal{N}\left(0,P\right)$, which then defines the joint pdf
\[
p_{Y_{j_{1}},...,Y_{j_{k}}}\left(y_{j_{1}},...,y_{j_{k}}\right)=\sum_{i=1}^{\left|\mathcal{M}\right|}p_{M}\left(m_{i}\right)\left(2\pi\right)^{-\frac{k}{2}}e^{-\frac{1}{2}\sum_{l=1}^{k}\left(y_{j_{l}}-c_{ij_{l}}\right)^{2}}
\]
and marginal pdfs
\[
p_{Y_{j_{l}}}\left(y_{j_{l}}\right)=\sum_{i=1}^{\left|\mathcal{M}\right|}p_{M}\left(m_{i}\right)\left(2\pi\right)^{-\frac{1}{2}}e^{-\frac{1}{2}\left(y_{j_{l}}-c_{ij_{l}}\right)^{2}}
\]
for $l\in\mathbb{N}_{1:k}$ and $y_{j_1}, ..., y_{j_k} \in \mathcal{Y}$.
In general $p_{Y_{j_{1}},...,Y_{j_{k}}}\neq\prod_{l=1}^{k}p_{Y_{j_{l}}}$.

For all $l\in\mathbb{N}_{1:k}$ and $y_{j_1}, ..., y_{j_k} \in \mathcal{Y}$ define the random variables
\[
\mathring{p}_{Y_{j_{1}},...,Y_{j_{k}}}\left(y_{j_{1}},...,y_{j_{k}}\right)=\frac{1}{\left|\mathcal{M}\right|}\sum_{i=1}^{\left|\mathcal{M}\right|}\left(2\pi\right)^{-\frac{k}{2}}e^{-\frac{1}{2}\sum_{l=1}^{k}\left(y_{j_{l}}-C_{ij_{l}}\right)^{2}}
\]
and
\[
\mathring{p}_{Y_{j_{l}}}\left(y_{j_{l}}\right)=\frac{1}{\left|\mathcal{M}\right|}\sum_{i=1}^{\left|\mathcal{M}\right|}\left(2\pi\right)^{-\frac{1}{2}}e^{-\frac{1}{2}\left(y_{j_{l}}-C_{ij_{l}}\right)^{2}}
\]

By the law of large numbers
\begin{align*}
\mathring{p}_{Y_{j_{1}},...,Y_{j_{k}}}\left(y_{j_{1}},...,y_{j_{k}}\right)
    & \xrightarrow[n\to\infty]{\text{almost surely}}
    \mathbb{E}_{C_{1j_{1}},...,C_{1j_{k}}} \left[
        \left(2\pi\right)^{-\frac{k}{2}}
        e^{-\frac{1}{2}\sum_{l=1}^{k}\left(y_{j_{l}}-C_{1j_{l}}\right)^{2}}
    \right]\\
\mathring{p}_{Y_{j_{l}}}\left(y_{j_{l}}\right)
    & \xrightarrow[n\to\infty]{\text{almost surely}}
    \mathbb{E}_{C_{1j_{l}}} \left[
        \left(2\pi\right)^{-\frac{1}{2}}
        e^{-\frac{1}{2}\left(y_{j_{l}}-C_{1j_{l}}\right)^{2}}
    \right]
\end{align*}
where the first expectation factorises since the $C_{1j_{1}},...,C_{1j_{k}}$ are mutually independent.
It follows that for all $y_{j_{1}},...,y_{j_{k}}\in\mathcal{Y}$
\[
\mathring{p}_{Y_{j_{1}},...,Y_{j_{k}}}\left(y_{j_{1}},...,y_{j_{k}}\right)-\prod_{l=1}^{k}\mathring{p}_{Y_{j_{l}}}\left(y_{j_{l}}\right)\xrightarrow[n\to\infty]{\text{almost surely}}0
\]
such that in the limit the pdf indeed factorises.
Evaluating the expectation above we find that for each $Y_{j}$ and $y_{j}$ $\mathring{p}_{Y_{j}}\left(y_{j}\right)\xrightarrow[n\to\infty]{\text{almost surely}}\left(2\pi\left(P+1\right)\right)^{-\frac{1}{2}}e^{-\frac{1}{2(P+1)}y_{j}^{2}}=\varphi\left(y_{j}|0,P+1\right)$ which concludes the proof.
\end{proof}

It is instructive to consider the analogous statement for any $k$ pairwise different sender bits $X_{j_1}, ..., X_{j_k}$.
The proof follows analogous arguments and is another illustration of the fact that in independent additive noise channels the bound on the MAP error is fully determined by the entropy of the sender bits $H\left(X^n\right) = H\left(Y^n\right) - H\left(Z^n\right)$.

\section{Further thoughts}

According to the efficient coding hypothesis the brain implements an efficient code for representing sensory input by neuronal spiking \cite{barlow1961}.
Observed dependencies between neurons and hence redundancies are sometimes viewed as contradicting the efficient coding hypothesis \cite{barlow1961,simoncelli2003}.
The results presented in Section~2 clarify, however, that an optimal code should maximise the joint entropy $H\left(Y^n\right)$ of receiver (or sender) bits.
For fixed marginal entropies $H\left(Y_j\right)$ the maximum is indeed achieved if all units are mutually independent.
However, since the marginal entropies are not fixed there can in general be configurations that have higher joint entropy while the units are not mutually independent.
This also clarifies the intuition expressed in Shannon's early work that the transmitted signals should approximate white noise to approximate the maximum information rate \cite[Section~25.]{Shannon1948}.

\bibliographystyle{alpha}
\bibliography{references}

\end{document}